\newcommand{\mylongtitle}{Stack-Summarizing Control-Flow Analysis \\ of Higher-Order Programs}
\newcommand{\myshorttitle}{}
\newcommand{\defterm}[1]{\textbf{#1}}
\newcommand{\ie}{\emph{i.e.}}
\newcommand{\eg}{\emph{e.g.}}
\newcommand{\syn}[1]{\mathsf{#1}}
\newcommand{\var}[1]{\mathit{#1}}
\newcommand{\s}[1]{\mathit{#1}}
\newcommand{\parto}{\rightharpoonup}
\newcommand{\dom}{\var{dom}}
\newcommand{\set}[1]{\left\{#1\right\}}
\newcommand{\setbuild}[2]{\left\{ #1 : #2\right\}}
\newcommand{\Pow}[1]{{\mathcal{P}\left(#1\right)}}
\newcommand{\PowSm}[1]{{\mathcal{P}(#1)}}
\newcommand{\union}{\cup}
\newcommand{\vect}[1]{\langle #1\rangle}
\newcommand{\vecp}[1]{\vec{#1}\;'}
\newcommand{\To}{\mathrel{\Rightarrow}}
\newcommand{\wt}{\sqsubseteq}
\newcommand{\join}{\sqcup}
\newcommand{\StackAlpha}{\Gamma}
\newcommand{\stackchar}{\gamma}
\newcommand{\sembr}[1]{\ensuremath{[\![{#1}]\!]}}
\newcommand{\opor}{\mathrel{|}}
\newcommand{\produces}{\mathrel{::=}}
\newcommand{\vv}{v}
\newcommand{\lam}{\ensuremath{\var{lam}}}
\newcommand{\lamterm}{$\lambda$-term}
\newcommand{\lc}{$\lambda$-calculus}
\newcommand{\call}{\ensuremath{\var{call}}}
\newcommand{\cif}[3]{\mathbf{if}\; #1\; \mathbf{then}\; #2\; \mathbf{else}\; #3}
\newcommand{\free}{\mathit{free}}
\newcommand{\ttlp}{\mbox{\tt (}}
\newcommand{\ttrp}{\mbox{\tt )}}
\newcommand{\appform}[2]{\ttlp #1\; #2\ttrp}
\newcommand{\lamform}[2]{\ttlp \uplambda\;\ttlp#1\ttrp\;#2\ttrp}
\newcommand{\letiform}[3]{\ttlp {\tt let}\; \ttlp\ttlp#1\; #2\ttrp\ttrp\; #3\ttrp}
\newcommand{\fexpr}{f}
\newcommand{\expr}{e}
\newcommand{\aexpr}{\mbox{\sl {\ae}}}
\newcommand{\Eval}{{\mathcal{E}}}
\newcommand{\ArgEval}{{\mathcal{A}}}
\newcommand{\Inject}{{\mathcal{I}}}
\newcommand{\state}{\varsigma}
\newcommand{\qstate}{q}
\newcommand{\Addr}{\s{Addr}}
\newcommand{\store}{\sigma}
\newcommand{\env}{\rho}
\newcommand{\clo}{\var{clo}}
\newcommand{\cont}{\kappa}
\newcommand{\alloc}{\mathit{alloc}}
\newcommand{\addr}{a}
\newcommand{\aTo}{\leadsto}
\newcommand{\aInject}{{\hat{\mathcal{I}}}}
\newcommand{\sa}[1]{\widehat{\mathit{#1}}}
\newcommand{\aEval}{{\hat{\mathcal{E}}}}
\newcommand{\aArgEval}{{\hat{\mathcal{A}}}}
\newcommand{\astate}{{\hat{\varsigma}}}
\newcommand{\aAddr}{\sa{Addr}} 
\newcommand{\astore}{{\hat{\sigma}}}
\newcommand{\aenv}{{\hat{\rho}}}
\newcommand{\aclo}{{\widehat{\var{clo}}}}
\newcommand{\acont}{{\hat{\kappa}}}
\newcommand{\aaddr}{{\hat{\addr}}}
\newcommand{\aalloc}{{\widehat{alloc}}}
\newcommand{\aval}{{\widehat{\var{val}}}}
\newcommand{\absmap}{\alpha}
\newcommand{\rp}{rp}
\newcommand{\arp}{\widehat{\rp}}
\newcommand{\nulladdr}{\sa{null}}
\newcommand{\ControlStates}{Q}
\newcommand{\transfunction}{\delta}
\newcommand{\conf}{c}
\newcommand{\aconf}{{\hat c}}
\newcommand{\phrame}{\phi}
\newcommand{\aphrame}{\hat{\phi}}
\newcommand{\stackact}{g}
\newcommand\PDTrans{\longmapsto}
\DeclareMathOperator*{\pdedge}{\rightarrowtail}
\newcommand{\biedge}[2]{#1 \mathrel{\rightarrowtail} #2}
\newcommand{\DSConfs}{S}
\newcommand{\DSEdges}{E}
\newcommand{\ECEdges}{H}
\newcommand{\fnet}[1]{\lfloor #1 \rfloor}
\newcommand{\ecg}{$\epsilon$-closure graph}
\newcommand\RPDTrans{{\;\longmapsto\!\!\!\!\!\!\!\!\longrightarrow}}
\newcommand{\rootset}{\mathit{Root}}
\newcommand{\touch}{\mathcal{T}}
\newcommand{\touchrel}[1]{\leadsto_{\touch,#1}}
\newcommand{\reach}{\mathcal{R}}
\newcommand{\agc}{AGC}
\newcommand{\aTopd}{\rightharpoondown}
\newcommand{\aTons}{\curvearrowright}
\DeclareMathOperator*{\aoTons}{\aTons}
\newcommand{\aToss}{\approx\!\!>}
\DeclareMathOperator*{\aoToss}{\aToss}
\newcommand{\aToagc}{\aToss_{AGC}}
\newcommand{\dsg}{G}
\newcommand{\epcg}{G_\epsilon}
\newcommand{\delconf}{\Delta S}
\newcommand{\deledge}{\Delta E}
\newcommand{\delshort}{\Delta H}
\newcommand{\addEmpty}{addShort}
\newcommand{\addEdge}{addEdge}
\newcommand{\Explore}{Explore}
\newcommand{\ass}{\widehat{ss}}
\newcommand{\ssum}{\absmap_{S}}
\newcommand{\wts}{\wt_{S}}
\newcommand{\apush}{\textbf{push}}
\newcommand{\ssumfs}{\ssum^{fs}}
\newcommand{\wtsfs}{\wts^{fs}}
\newcommand{\apushfs}{\apush^{fs}}
\newcommand{\ssumra}{\ssum^{ra}}
\newcommand{\wtsra}{\wts^{ra}}
\newcommand{\apushra}{\apush^{ra}}
\newcommand{\apred}{\mathbf{pred}}
\begin{document}
\title{\mylongtitle}
\titlerunning{\myshorttitle}
\author{Christopher Earl\inst{1} \and Matthew Might\inst{1} \and David {Van Horn}\inst{2}}
%
%
\institute{University of Utah,
Salt Lake City, Utah
\and 
Northeastern University,
Boston, Massachusetts}

\maketitle              

\begin{abstract}
  Two sinks drain precision from higher-order flow analyses: (1) merging
of argument values upon procedure call and (2) merging of return
values upon procedure return.
To combat the loss of precision, these two sinks have been addressed
independently.  
In the case of procedure calls, \emph{abstract garbage collection}
reduces argument merging; while in the case of procedure returns,
\emph{context-free approaches} eliminate return value merging.
It is natural to expect a combined analysis could enjoy the mutually
beneficial interaction between the two approaches.
The central contribution of this work is a direct product of abstract
garbage collection with context-free analysis.
%
%
The central challenge to overcome is the conflict between the core
constraint of a pushdown system and the needs of garbage collection:
a pushdown system can only see the top of the stack, yet garbage
collection needs to see the entire stack during a collection.
To make the direct product computable, we develop ``stack summaries,''
a  method for tracking stack properties at each control
state in a pushdown analysis of higher-order programs.



\end{abstract}

\section{Introduction} \label{intro}

In higher-order flow analysis~\cite{mattmight:Shivers:1991:CFA},
merging is the enemy.
Merging of flow sets and control-flow paths is what destroys
precision.
Merging occurs in two forms: merging on call (for arguments) and
merging on return (for return-flow and return values).
Our goal is to alleviate argument-merging while simultaneously
eliminating return-flow merging.

For an example of both kinds of merging, consider the following code:
\begin{code}
(let* ((id (lambda (x) x))
       (a  (id 3))
       (b  (id 4)))
 b)\end{code}
Flow-sensitive 0CFA makes the following inferences:
%
(1) the two instances of the argument {\tt x} merge together:
{\tt 3} and {\tt 4}, and
(2) the (implicit) continuations at applications of the identity
function also merge together, causing its return values to merge in
the variable {\tt b}.

For two decades, context-sensitivity---splitting bindings, calls and
returns among a finite set of abstract instances---has been the
``solution'' to both merging problems.
But, context-sensitivity is a finite, monotonic band-aid for an
infinite, non-monotonic problem.
Arguments ultimately merge because flow information accretes
\emph{monotonically}: once an analysis says that $x$ may flow to $y$, it will
never revoke that inference.
Return-flows merge because finite flow analyses implicitly allocate a
finite number of abstract stack pointers to continuations.

\subsection{Two solutions}

Might and Shivers developed abstract garbage collection (abstract GC)
to tame the argument-merging
problem~\cite{mattmight:Might:2006:GammaCFA}.
Abstract GC assumes a small-step abstract
interpretation~\cite{mattmight:Cousot:1977:AI,mattmight:Cousot:1979:Galois}
over a finite state-space.
Much like concrete GC, abstract GC finds all of the reachable
addresses in an abstract heap and reclaims any unreachable
addresses.
With abstract GC, the abstract heap no longer grows
monotonically across a small-step transition: the same abstract address has
the chance to get rebound to a singleton flow set over a different
value many times over, thereby making more judicious use of the
abstract resources available.
For programs composed of (possibly recursive) tail calls and closures
which never escape, abstract garbage collection delivers perfectly
precise control-flow analysis.

Pushdown control-flow analysis
(PDCFA)~\cite{local:Earl:2010:PDCFA}, a relative of Vardoulakis
and Shivers's CFA2~\cite{mattmight:Vardoulakis:2010:CFA2}, solves the
return-flow problem by using the arbitrarily large pushdown stack to
model the concrete call stack; thus, continuations never merge.
PDCFA can reason through arbitrary levels of recursive calls.

\subsection{One problem}
Our mission is to combine the benefits of both abstract
garbage collection and pushdown control-flow analysis: to produce an
``almost complete'' control-flow analysis which eliminates \emph{most}
argument merging and \emph{all} continuation merging. 
\emph{
The challenge is an apparent incompatibility between the two techniques.}

Abstract garbage collection must have the ability to search an
entire state---stack included---to determine the reachable addresses.
A pushdown control-flow analysis approximates the evaluation of a
program, roughly speaking, as a push-down automaton.  The machine
states of the PDA represent the control string, environment, and store
(heap) of the evaluator; while the stack of the PDA represents the
evaluator's stack, where each letter of the stack alphabet represents
a continuation frame.  Transitions of the PDA push and pop frames much
like an abstract machine (e.g., the CESK machine) pushes and pops
continuations.  

When a machine like the CESK machine performs garbage collection, it
crawls the stack to determine reachable heap locations.
That works because the stack is explicit in each machine state:
it's the K component.
But in a pushdown analysis, the abstract stack is not represented in
each control state.
Rather, the stack's structure is scattered across the transition
graph between control-states.
In more detail, the data structure accumulated during pushdown
analysis is a transition graph where each node contains the C, E and S
components and each edge is labeled with the change to K that
happens on that transition.
In order to recover the possible stack(s) at a node in the graph, the
analysis must consider all the paths from the initial control
state to the current state.


\subsection{Our contribution: SSCFA}

To complete our mission, we develop a new kind of higher-order
pushdown-like control-flow analysis that includes stack
\emph{summaries} in its control states: SSCFA.
To make our contribution more general, we place constraints on stack
summaries (in lieu of fixing them to be reachable addresses) and we
let clients supply alternate summaries, \eg, all procedures live on the
stack, whether the security context is privileged or unprivileged.
Thus, SSCFA could drive pushdown variants of dependence analysis or
even escape analysis in addition to abstract garbage collection.

%

The remainder of this paper is organized as follows:
\begin{itemize}
\item Section~\ref{prelim} reviews simple preliminaries for working
  with pushdown systems.
\item Section~\ref{pdcfa} reviews pushdown control-flow analysis and
  Dyck state graphs.
\item Section~\ref{abstract-gc} introduces the problem with integrating abstract garbage collection and pushdown analysis.
\item Section~\ref{ssintro} informally introduces the notion of a
  \emph{stack summary}, defines criteria for \emph{stack
    summarization}, and gives example summarization strategies.
\item Section~\ref{sscfa} formally defines stack-summarizing control
  flow-analysis.
\item Section~\ref{ssgammacfa} presents the computable product of stack
  summarizing control-flow analysis and abstract garbage collection.
\item Section~\ref{related} discusses related work and Section~\ref{concl} concludes.
\end{itemize}




\section{Pushdown preliminaries} \label{prelim}

In this work, we make extensive use of pushdown systems.
(A pushdown automaton is a specific kind of pushdown system.)
There are many (equivalent) definitions of these machines in the
literature, so we adapt our own definitions from \cite{mattmight:Sipser:2005:Theory}.
Even those familiar with pushdown theory may want to skim this
section to pick up our notation.

\subsection{Stack actions, stack change and stack manipulation}

Stacks are sequences over a alphabet $\StackAlpha$.
Pushdown systems do much stack manipulation; to represent this more
concisely, we turn stack alphabets into ``action'' sets where each
character represents a stack change: push, pop or no change.

For each character $\stackchar$ in a stack alphabet $\StackAlpha$, the
\defterm{stack-action} set $\StackAlpha_\pm$ contains a push
($\stackchar_{+}$) and a pop ($\stackchar_{-}$) character and a
no-stack-change indicator ($\epsilon$):
\begin{align*}
  \stackact \in \StackAlpha_\pm &\produces \epsilon && \text{[stack unchanged]} 
  \\
  &\;\;\opor\;\; \stackchar_{+}  \;\;\;\text{ for each } \stackchar \in \StackAlpha && \text{[pushed $\stackchar$]}
  \\
  &\;\;\opor\;\; \stackchar_{-}  \;\;\;\text{ for each } \stackchar \in \StackAlpha && \text{[popped $\stackchar$]}
  \text.
\end{align*}

Given a string of stack actions, we can compact it into a minimal
string describing net stack change.
We do so through the operator $\fnet{\cdot} : \StackAlpha_\pm^* \to
\StackAlpha_\pm^*$, which cancels out opposing adjacent push-pop stack
actions:
\(
  \fnet{\vec{\stackact} \; \stackchar_+\stackchar_- \; \vecp{\stackact}} = 
  \fnet{\vec{\stackact} \; \vecp{\stackact}} 
\) 
and
\(
  \fnet{\vec{\stackact} \; \epsilon \; \vecp{\stackact}} = 
  \fnet{\vec{\stackact} \; \vecp{\stackact}} 
\),
so that
  \(\fnet{\vec{\stackact}} = \vec{\stackact}\),
if there are no cancellations to be made in the string $\vec{\stackchar}$.
%



\subsection{Pushdown systems}
A \defterm{pushdown system} is a triple
$M = (\ControlStates,\StackAlpha,\transfunction)$ where
$\ControlStates$ is a finite set of control states;
$\StackAlpha$ is a stack alphabet; and
$\transfunction \subseteq
  \ControlStates \times \StackAlpha_\pm \times \ControlStates$ is a transition relation.
We use $\mathbb{PDS}$ to denote the class of all pushdown systems.
Unlike the more widely known pushdown automaton, a pushdown system
\emph{does not recognize a language}.
\\

\noindent
For the following definitions, let $M = (\ControlStates,\StackAlpha,\transfunction)$.
The \defterm{configurations} of this
  machine are pairs over control states and
  stacks:
  \(\s{Configs}(M) = \ControlStates \times \StackAlpha^*\).
The labeled \defterm{transition relation} $(\PDTrans_{M}) \subseteq \s{Configs}(M) \times \StackAlpha_\pm \times 
  \s{Configs}(M)$ determines whether one configuration may transition to another while performing the given stack action:
  \begin{align*}
    (\qstate, \vec{\stackchar}) 
    \mathrel{\PDTrans_M^\epsilon}
    (\qstate',\vec{\stackchar}) 
    & \text{ iff }
    (\qstate,\epsilon,\qstate')
    \in \transfunction
    && \text{[no change]}
\\
   (\qstate, \stackchar' : \vec{\stackchar}) 
   \mathrel{\PDTrans_M^{\stackchar'_{-}}}
   (\qstate',\vec{\stackchar})
    & \text{ iff }
    (\qstate,\stackchar'_{-},\qstate')
    \in \transfunction
    && \text{[pop]}
\\
    (\qstate, \vec{\stackchar}) 
    \mathrel{\PDTrans_{M}^{\stackchar'_{+}}}
    (\qstate',\stackchar' : \vec{\stackchar}) 
    & \text{ iff }
    (\qstate,\stackchar'_{+},\qstate')
    \in \transfunction
    && \text{[push]}
    \text.
  \end{align*}
Additionally, we define:
\begin{align*}
  \conf \mathrel{\PDTrans_{M}} \conf' &\text{ iff }
  \conf \mathrel{\PDTrans_{M}^{\stackact}} \conf' 
  \text{ for some stack action } \stackact\text,
  \\
  \conf \mathrel{\PDTrans_M^{\vec{\stackact}}} \conf'
  &\text{ iff }
  \conf = \conf_0
  \mathrel{\PDTrans_M^{\stackact_1}} 
  \conf_1
  \cdots
  \conf_{n-1} 
  \mathrel{\PDTrans_M^{\stackact_n}}
  \conf_n = \conf'
  \text{ for some  $\vec{\stackact} = \stackact_1 \ldots
    \stackact_n$,}
  \\
  \conf \mathrel{\PDTrans_M^{*}} \conf'
    &\text{ iff }
    \conf \mathrel{\PDTrans_M^{\vec{\stackact}}} \conf'
    \text{ for some }
    \vec{\stackact}
    \text.
\end{align*}

%


\subsection{Rooted pushdown systems}

A \defterm{rooted pushdown system} is a quadruple 
$(\ControlStates,\StackAlpha,\transfunction,\qstate_0)$ in which 
$(\ControlStates,\StackAlpha,\transfunction)$ is a pushdown system and
$\qstate_0 \in \ControlStates$ is an initial (root) state.
$\mathbb{RPDS}$  is the class of all rooted pushdown
systems.
For a rooted pushdown system $M =
(\ControlStates,\StackAlpha,\transfunction,\qstate_0)$, we define a
the \defterm{root-reachable transition relation}:
\begin{equation*}
  \conf \RPDTrans_M^{\stackact} \conf' \text{ iff } 
  (\qstate_0,\vect{})
  \mathrel{\PDTrans_M^*} 
  \conf 
  \text{ and }
  \conf 
  \mathrel{\PDTrans_M^\stackact}
  \conf'
  \text.
\end{equation*}
In other words, the root-reachable transition relation also makes
sure that the root control state can actually reach the transition.
The root-reachable relation is overloaded to operate on
control states:
\begin{equation*}
  \qstate 
   \mathrel{\RPDTrans_M^\stackact}
  \qstate' \text{ iff }
  (\qstate,\vec{\stackchar}) 
   \mathrel{\RPDTrans_M^\stackact}
  (\qstate',\vecp{\stackchar}) 
  \text{ for some stacks }
  \vec{\stackchar},
  \vecp{\stackchar}
  \text.
\end{equation*}

\section{Pushdown control-flow analysis} \label{pdcfa}

In this section we present the concrete and abstract semantics for the
pushdown control-flow analysis (PDCFA) of a call-by-value \lc{}, which
represents the core of a higher-order programming language.
To simplify presentation of the concrete and abstract semantics, we
analyze programs in A-Normal Form (ANF),
%
%
a syntactic discipline that enforces an order of evaluation and
requires that all arguments to a function be atomic:
\begin{align*}
  \expr \in \syn{Exp} &\produces \letiform{\vv}{\call}{\expr} && \text{[non-tail call]}
  \\
  &\;\;\opor\;\; \call && \text{[tail call]}
  \\
  &\;\;\opor\;\; \aexpr && \text{[return]}
  \\
  \fexpr,\aexpr \in \syn{Atom} &\produces \vv \opor \lam && \text{[atomic expressions]}
  \\
  \lam \in \syn{Lam} &\produces \lamform{\vv}{\expr} && \text{[lambda terms]}
  \\
  \call \in \syn{Call} &\produces \appform{\fexpr}{\aexpr} && \text{[applications]}
  \\
  \vv \in \syn{Var} &\text{ is a set of identifiers} && \text{[variables]}
  \text.
\end{align*}
  
We use the CESK machine \cite{mattmight:Felleisen:1987:CESK} to
specify the semantics of ANF.
We chose the CESK machine because it has an explicit stack.
Figure~\ref{fig:conc-abs-conf-space} contains the concrete
configuration-space of this machine.
Each configuration contains a control-state component consisting of an
expression, an environment and a store; and a continuation/stack
component.
Under our abstractions, the stack component of this
configuration-space becomes both a finite ``stack summary'' in
abstract control states and a stack component in the pushdown system.
(See Appendix~\ref{classical-cfa} for a review of the
\emph{finite-state} approach and comparison to the pushdown approach.)

PDCFA does not collapse the abstract
stack into a finite structure like classical control-flow analysis.
%
Instead of folding the stack into the store through frame pointers,
PDCFA distributes the stack throughout an enriched abstract transition
system.
The abstract configuration-space of pushdown control-flow analysis (Figure
\ref{fig:conc-abs-conf-space}) is similar to concrete formulation.




\begin{figure}
  \begin{align*}
    \conf \in \s{Conf} &= \s{State} \times \s{Kont} 
    &
    \aconf \in \sa{Conf} &= \sa{State} \times \sa{Kont}
    && \text{[configurations]}
    \\
    \state \in \s{State} &= \syn{Exp} \times \s{Env} \times \s{Store}
    &
    \astate \in \sa{State} &= \syn{Exp} \times \sa{Env} \times \sa{Store}
    && \text{[states]}
    \\
    \env \in \s{Env} &= \syn{Var} \parto \s{Addr} 
    &
    \aenv \in \sa{Env} &= \syn{Var} \parto \sa{Addr}
    && \text{[environments]}
    \\
    \store \in \s{Store} &= \s{Addr} \to \s{Clo} 
    &
    \astore \in \sa{Store} &= \sa{Addr} \to \Pow{\sa{Clo}} 
    && \text{[stores]}
    \\
    \clo \in \s{Clo} &= \syn{Lam} \times \s{Env} 
    &
    \aclo \in \sa{Clo} &= \syn{Lam} \times \sa{Env}
    && \text{[closures]}
    \\
    \cont \in \s{Kont} &= \s{Frame}^*
    &
    \acont \in \sa{Kont} &= \sa{Frame}^* 
    && \text{[stacks]}
    \\
    \phrame \in \s{Frame} &= \syn{Var} \times \syn{Exp} \times \s{Env}  
    &
    \aphrame \in \sa{Frame} &= \syn{Var} \times \syn{Exp} \times \sa{Env}
    && \text{[stack frames]}
    \\
    \addr \in \s{Addr} &\text{ is an infinite set}
    &
    \aaddr \in \sa{Addr} &\text{ is a \emph{finite} set}
    && \text{[addresses]}
  \end{align*}
  \caption{Configuration-space for CESK machine and pushdown control-flow analysis.}
  \label{fig:conc-abs-conf-space}
\end{figure}


\subsection{Concrete semantics and PDCFA}

Next, we define the concrete semantics of ANF and pushdown
control-flow analysis simultaneously.
Specifically, we define program-to-machine injection, atomic
expression evaluation, reachable configurations/control states, the
transition relation and a resource-allocation parameter.
The abstraction functions that connect the concrete
configuration-space to the abstract configuration-space are
straightforward structural abstraction functions.
(Formal definitions of these abstractions can be found in Appendix
~\ref{abstraction-functions}.)

\paragraph{Program injection}

The concrete program-injection function pairs an expression with an
empty environment, store and stack to create the
initial configuration:
\begin{equation*}
  \conf_0 = \Inject(\expr) = (\expr, [], [], \vect{})
  \text.
\end{equation*}

We define two abstract injection functions---one that produces an
initial abstract control state, and one that produces an initial
abstract configuration.
The control-state injector $\aInject_\astate : \syn{Exp}
\to \sa{State}$ pairs an expression with an empty environment and
store to create the initial abstract state:
\begin{equation*}
  \astate_0 = \aInject_\astate(\expr) = (\expr, [], [])
  \text.
\end{equation*}
The configuration injector $\aInject_\aconf : \syn{Exp} \to \sa{Conf}$ tacks on an empty stack:
\begin{equation*}
  \aconf_0 = \aInject_\aconf(\expr) = (\aInject_\astate(\expr), \vect{})
  \text.
\end{equation*}

\paragraph{Atomic expression evaluation}

The atomic expression evaluator, $\ArgEval : \syn{Atom} \times \s{Env}
\times \s{Store} \parto \s{Clo}$ (or, $\aArgEval : \syn{Atom} \times
\sa{Env} \times \sa{Store} \to \PowSm{\sa{Clo}}$ in the abstract),
returns the value of an atomic expression in the context of an
environment and a store:
\begin{align*}
  \ArgEval(\lam,\env,\store) &= (\lam,\env)
  &
  \aArgEval(\lam,\aenv,\astore) &= \set{(\lam,\env)} 
  && \text{[closure creation]}
  \\
  \ArgEval(\vv,\env,\store) &= \store(\env(\vv))
  &
  \aArgEval(\vv,\aenv,\astore) &= \astore(\aenv(\vv)) 
  && \text{[variable look-up]}
  \text.
\end{align*}

\paragraph{Reachable configurations}
The program evaluator $\Eval : \syn{Exp} \to
\Pow{\s{Conf}}$ (or, $\aEval : \syn{Exp} \to \PowSm{\sa{Conf}}$ in the
abstract) computes all of the configurations reachable from the
initial configuration:
\begin{align*}
  \Eval(\expr) = \setbuild{ \conf }{ \Inject(\expr) \To^* \conf } 
  &&
  \aEval(\expr) = \setbuild{ \aconf }{ \aInject_\aconf(\expr) \aTopd^* \aconf }
  \text.
\end{align*}
Since the stack's depth is unbounded, the number of reachable
configurations in both the concrete \emph{and} abstract semantics
could be infinite.

\paragraph{Transition relation}
The concrete transition, $\conf \To \conf'$, and its abstract
counterpart, $\aconf \aTopd \aconf'$, each have three rules.
The first rule handles tail calls by evaluating the function into a
closure, evaluating the argument into a value and then moving to the
body of the \lamterm{} within the closure:
\begin{gather*}
  \conf = (\overbrace{(\sembr{\appform{\fexpr}{\aexpr}}, \env, \store)}^{\state}, \cont)\
  \To\
  ((\expr,\env'',\store'),\cont)
  \text{, where }
  \\
  \begin{align*}
    (\sembr{\lamform{\vv}{\expr}}, \env') &= \ArgEval(\fexpr,\env,\store)
    &
    \env'' &= \env'[\vv \mapsto \addr]
    \\
    \addr &= \alloc(\vv,\state)
    &
    \store' &= \store[\addr \mapsto \ArgEval(\aexpr,\env,\store)]
  \end{align*}
  \\[.5em]
  \aconf = (\overbrace{(\sembr{\appform{\fexpr}{\aexpr}}, \aenv, \astore)}^{\astate}, \acont)\
  \aTopd\
  ((\expr,\aenv'',\astore'), \acont)
  \text{, where }
  \\
  \begin{align*}
    (\sembr{\lamform{\vv}{\expr}}, \aenv') &\in \aArgEval(\fexpr,\aenv,\astore)
    &
    \aenv'' &= \aenv'[\vv \mapsto \aaddr]
    \\
    \aaddr &= \aalloc(\vv,\astate)
    &
    \astore' &= \astore \join [\aaddr \mapsto \aArgEval(\aexpr,\aenv,\astore)]
    \text.
  \end{align*}
\end{gather*}
In the abstract semantics, the tail-call transition is
nondeterministic, since multiple abstract closures may be invoked.

A non-tail call builds a frame, adds it to the stack, and evaluates the call:
\begin{align*}
  ((\sembr{\letiform{\vv}{\call}{\expr}}, \env, \store), \cont)\
  &\To\
  ((\call,\env,\store), (\vv,\expr,\env) : \cont)
  \\[.5em]
  ((\sembr{\letiform{\vv}{\call}{\expr}}, \aenv, \astore), \acont)\
  & \aTopd\
  ((\call,\aenv,\astore), (\vv,\expr,\aenv) : \acont)
  \text.
\end{align*}
A function return pops the top frame of the stack and uses that frame
to continue the computation after binding the return value to the
frame's variable:
\begin{gather*}
  \conf = (\overbrace{(\aexpr, \env, \store)}^{\state}, (\vv,\expr,\env') : \cont)\
  \To\
  ((\expr,\env'',\store'), \cont)
  \text{, where }
  \\
  \begin{align*}
    \addr &= \alloc(\vv,\state)
    &
    \env'' &= \env'[\vv \mapsto \addr] 
    &
    \store' &= \store[\addr \mapsto \ArgEval(\aexpr,\env,\store)]
 \end{align*}
  \\[1em]
  \aconf = (\overbrace{(\aexpr, \aenv, \astore)}^{\astate}, 
    (\vv,\expr,\aenv')  : \acont'  
  )\
  \aTopd\
  ((\expr,\aenv'',\astore'), \acont')
  \text{, where }
  \\
  \begin{align*}
    \aaddr &= \aalloc(\vv,\astate)
    &
    \aenv'' &= \aenv'[\vv \mapsto \aaddr]
    &
    \astore' &= \astore \join [\aaddr \mapsto \aArgEval(\aexpr,\aenv,\astore)]
   \text.  
  \end{align*}
\end{gather*}

\paragraph{Allocation, polyvariance and context-sensitivity}
The address-allocation function is an opaque parameter in both
semantics.
For the concrete semantics, letting addresses be
natural numbers suffices, and then the allocator can use the lowest
unused address: $\s{Addr} = \mathbb{N}$ and
$\alloc(v,(\expr,\env,\store,\cont)) = 1 + \max(\dom(\store))$.
%
The opacity is useful because abstract semantics also parameterize
allocation---to provide a knob to tune the polyvariance and
context-sensitivity of the resulting analysis---and allowing the
abstract semantics to choose a particular concrete allocation function
can simplify proofs of soundness.
%

\subsection{Removing the explicit stack}

The reachable subset of the abstract configuration-space for any
program could be infinite.
(There is no bound on the depth of the stack, so there are an infinite
number of stacks and therefore an infinte number of configurations.)
Consequently, the na\"ive exploration of the reachable abstract
configurations used in classical flow analyses may not terminate.
Fortunately, because the abstract semantics describe a pushdown
system, we can construct a finite (computable) description of the
reachable configurations.
Specifically, we can construct a labeled transition system in which
nodes are control states, and labels on edges denote stack change.

We define the legal transitions in any such graph through the transition
relation $(\aTons) \subseteq \sa{State} \times \sa{Frame}_\pm \times
\sa{State}$.
Three rules define this relation: one determining when to push, one
when to pop and the last when to leave the stack unchanged.
The labels on each transition are the stack action for the transition:
%
\begin{align*}
  \astate
  \ \aoTons^\epsilon\ 
  \astate'
  & \text{ iff }
  \aconf = (\astate, \acont)
  \ \aTopd\ 
  (\astate', \acont) = \aconf'
  \mbox{, for any stack }\acont
&& \text{[tail call]}
\\
%
  \astate
  \ \aoTons^{\aphrame_+}\
  \astate'
  & \text{ iff }
  \aconf = (\astate, \acont)
  \ \aTopd\ 
  (\astate', \aphrame : \acont) = \aconf'
  \mbox{, for any stack }\acont
&& \text{[non-tail call]}
\\
%
  \astate
  \ \aoTons^{\aphrame_-}\ 
  \astate'
  & \text{ iff }
  \aconf = (\astate, \aphrame : \acont)
  \ \aTopd\ 
  (\astate', \acont) = \aconf'
  \mbox{, for any stack }\acont
&& \text{[return]}
\text.
\end{align*}

From this transition relation, we build a rooted pushdown system
$(\ControlStates,\StackAlpha,\transfunction,\qstate_0)$ for a program $\expr$ such that $\ControlStates = \sa{States}$,
$\StackAlpha = \sa{Frame}$,
$\transfunction = (\aTons)$, and
$\qstate_0 = \aInject_\astate(\expr)$.
The subset of this rooted pushdown system reachable over legal paths
provides a finite description of the original configuration-space.
This finite subset is a \textbf{Dyck state graph}
(DSG)~\cite{local:Earl:2010:PDCFA}.
(A path is legal only if all of the pops match up with pushes; there can be unmatched pushes left over.)
Several techniques can compute the Dyck state graph; for an efficient
technique specific to PDCFA, we defer to our recent
work~\cite{local:Earl:2010:PDCFA} or the algorithm as modified in
Appendix~\ref{sec:algorithm}.


\section{Adding abstract garbage collection to PDCFA}  \label{abstract-gc}

In the classical version of abstract garbage collection, the abstract
interpretation ``collects'' each configuration before each transition~\cite{mattmight:Might:2006:GammaCFA}.
To collect a state, it explores the state to find the
reachable abstract addresses, and then it discards unreachable
addresses from the store, \ie, it maps them to the empty set.

Suppose we were to add abstract garbage collection to PDCFA.
At first, we might try collecting a control state prior to adding an
edge.
But, this approach doesn't work: to know the reachable addresses of a
configuration, the analysis must have access to the stack paired with
the control state.
Unfortunately, the stack has been distributed across the
Dyck state graph being accreted during the analysis.
To determine the possible stacks paired with a control state, the
analysis must consider all legal paths to that control state.

Considering all possible paths to a control state is expensive, and
troublesome in any event, since there could be an infinite number of
such paths.
A better solution would allow the analysis to iteratively compute
properties of stacks---like reachable addresses---and store these
summaries at individual control states.
We call this solution stack summaries.

\section{Stack summaries} \label{stack-summaries} \label{ssintro}

As PDCFA constructs a DSG, it accretes reachable control states one
edge at a time.
Each time it adds a labeled edge, it is abstractly executing the
transition relation $(\aTopd)$.
To perform abstract garbage collection before each transition, the
analysis must know the reachable addresses for all configurations
described by paths to that state.
To accomplish this, we add a stack summary to each control state.
A stack summary is a client-defined finite abstraction of a stack.
To perform abstract garbagae collection, we will instantiate this
summary to be the reachable addresses in the stack.

A stack summary describes some property of the stack, \eg, the topmost
frame, the reachable addresses, the privilege level of the current
context.
With respect to our analysis, the set $\sa{Summary}$ is a parameter
containing all stack summaries, and we denote an individual stack
summary as $\ass$.
A summarizing function, $\ssum: \s{Stack} \to \sa{Summary}$,
walks a stack to compute a summary.
Every stack summary regime also requires a push function parameter,
$\apush : \sa{Frame} \times \sa{Summary} \to \sa{Summary}$, which
computes the abstract effect of pushing a frame on a summary.
There are three requirements on stack summaries:
\begin{enumerate}
\item Summaries must be able to represent all possible stacks.
\item The set of summaries must be finite.
\item Summaries must form a lattice ($\wts$).
\end{enumerate}
In addition, the push function must faithfully simulate
concrete push; formally:
\begin{equation*}
  \text{ if }
  \absmap(\phrame) \wt \aphrame
  \text{ and }
  \ssum(\cont) \wts \ass
\text{, then }
 \ssum(\phrame : \cont) \wt \apush(\aphrame,\ass)
  \text.
\end{equation*}



We can efficiently percolate stack summaries through the construction
of a Dyck state graph, so that the algorithm never has to reconsider all paths
to a control state.
In fact, the algorithm never considers an entire path all at once; it
propagates summaries edge-by-edge.
To extend the Dyck-state-graph-construction algorithm, we need to
consider three cases: what is the effect of adding a push; what is the
effect of adding a pop; and what is the effect of a stack no-op?
In this section, we describe the core of the algorithm informally
but with sufficient detail to motivate the high-level idea.
In the next section, we'll describe the system-space of the algorithm
formally, and Appendix~\ref{sec:algorithm} contains the algorithm for
computing a Dyck state graph with stack summaries.

\subsection{Propagating stack summaries during DSG construction}

The propagation of stack summaries across no-op and pop edges during
DSG construction is agnostic of the particular stack summary in use.
Propagation across push edges is fully factored into the push function
parameter, $\apush$.

\paragraph{The summarizing no-op operation}

When the Dyck state graph construction algorithm needs to propagate
summaries across an edge which does not change the stack, the new
stack summary is identical to the old sumary:
when there is no stack change, there is no change to stack summaries.
%

\paragraph{The summarizing pop operation}

The pop operation, like the no-op operation, can be handled without
knowledge of the particular stack summary in use.
In PDCFA, every pop transition has at least one matching push
transition.
The stack summaries after a pop are those stack summaries that can
reach the new state with no net stack change.

These states are easy to find, because the DSG construction algorithm
maintains an $\epsilon$-closure graph in addition to the control-state
transition graph.
Edges in the $\epsilon$-closure graph connect states reachable
through no net stack change.

Diagramatically, we know that the stack summary at state
$\state_4$ in the following is the same as the stack summary for
$\state_1$:
\begin{equation*}
  \xymatrix{
    \state_1 \ar[dr]^{\phrame_+}
    \ar[rrr]^{\epsilon}
    &
    &
    &
    \state_4
    \\
    &
    \state_2 \ar[r]^{\epsilon}
    &
    \state_3 \ar[ur]^{\phrame_-}
  }
\end{equation*}

\paragraph{The summarizing push operation}

Pushing a frame onto a stack makes a local change to the stack.
However, pushing a frame onto a stack may nontrivially change the
summary.
The operation $\apush$ must be able to determine the new stack
summary, so that when a push edge is introduced, the $\apush$ operation
determines the subsequent summary.

\subsection{Example: A frame-set summary}

The frame-set summary is both general and useful.
The frame-set summary is the set of (abstract) frames currently in the
stack:
\begin{equation*}
  \sa{Summary_{fs}} = \Pow{\sa{Frame}}
  \text.
\end{equation*}
This summary ignores order and repetition in favor of finite size and a simple (subset-based) lattice:
\begin{equation*}
  \ass \wtsfs \ass'
  \text{ iff }
  \ass \subseteq \ass'
  \text.
\end{equation*}

The summarization function for the frame set summary,
$\ssumfs : \sa{Stack} \to \sa{Summary_{fs}}$,
abstracts each frame and keeps it in a set:
%
\begin{equation*}
  \ssumfs\vect{\phrame_1,\ldots,\phrame_n} =
  \set{\absmap_{Frame}(\phrame_1),\ldots,\absmap_{Frame}(\phrame_n)}
  \text.
\end{equation*}
%

The push operation $\apushfs : \sa{Frame} \times \sa{Summary_{fs}} \to \sa{Summary_{fs}}$ simply adds the new frame to the set:
\(
  \apushfs(\aphrame, \ass)
  =
  \{\aphrame\}
  \cup
  \ass
  \).
%

\subsection{Example: A reachable-addresses summary}

\label{sec:reachable-address-summary}

The reachable-addresses summary is the set of all the addresses
directly touchable by a frame on the stack.
We formally define \emph{touch} through the touch function, $\touch_f
: \sa{Frame} \to \sa{Addr}$, which returns the addresses within the
given frame:
\begin{equation*}
  \touch_f(\vv,\expr,\aenv) =
  \setbuild{\aenv(\vv')}{\vv' \in \free(\expr) - \{\vv\}}
  \text.
\end{equation*}
The summary-space is the set of addresses:
\begin{equation*}
  \sa{Summary_{ra}} = \Pow{\sa{Addr}}
  \text.
\end{equation*}
The order on summaries is subset inclusion:
\begin{equation*}
  \ass \wtsra \ass'
  \text{ iff }
  \ass \subseteq \ass'
  \text.
\end{equation*}

The reachable address summarization function,
$\ssumra : \sa{Stack} \to \sa{Summary_{ra}}$,
finds the reachable addresses of each abstracted frame and keeps them in a set:
\begin{equation*}
  \ssumra\vect{\phrame_1,\ldots,\phrame_n} =
  \touch_f(\absmap_{Frame}(\phrame_1)) \union
  \cdots \union
  \touch_f(\absmap_{Frame}(\phrame_n))
  \text.
\end{equation*}
%

The push operation $\apushra : \sa{Frame} \times \sa{Summary_{ra}} \to \sa{Summary_{ra}}$ adds the reachable addresses from the new frame to the set:
\begin{equation*}
  \apushra(\aphrame, \ass)
  =
  \touch_f(\aphrame)
  \cup
  \ass
  \text.
\end{equation*}

The reachable address summary provides the information about the stack needed for abstract garbage collection with pushdown control-flow analysis.


\section{SSCFA: Stack-summarizing control-flow analysis} \label{sscfa}

In the last section, we defined stack summaries and motivated their
implementation informally.
In this section, we formally define the configuration-space and an
abstract pushdown semantics for stack-summarizing control-flow analysis
(SSCFA).
Appendix~\ref{sec:algorithm} describes a formal algorithm for creating
a finite model of the reachable state-space for SSCFA.

\subsection{Abstract configuration-space}
The only change between the configuration-spaces for the pushdown
control-flow analysis and the stack-summarizing control-flow analysis
is that configurations contain stack summaries instead of stacks:
\begin{align*}
  \aconf \in \sa{Conf} &= \sa{State} \times \sa{Summary} && \text{[configurations]}
  \text.
\end{align*}

%

\subsection{Abstract pushdown semantics}
The abstract transition relation for SSCFA is similar to the
transition relation for PDCFA.
The transition relation, $(\aToss) \subseteq \sa{Conf} \times
\sa{Frame}_\pm \times \sa{Conf}$ has three rules.
With respect to a program $\expr$, we can define a rooted pushdown
system, $M_{\rm SS} = (\sa{Conf},\sa{Frame},(\aToss),\aconf_0)$, where
$\aconf_0 = (\expr,[],[],\bot_{SS})$.

A tail call leaves the stack unchanged:
\begin{gather*}
  (\overbrace{(\sembr{\appform{\fexpr}{\aexpr}}, \aenv, \astore)}^{\astate}, \ass)\ 
  \aoToss^{\epsilon}\ 
  ((\expr,\aenv'',\astore'), \ass)
  \text{, where }
\\
\begin{align*}
  (\sembr{\lamform{\vv}{\expr}}, \aenv') &\in \aArgEval(\fexpr,\aenv,\astore)
  &
  \aenv'' &= \aenv'[\vv \mapsto \aaddr]
  \\
  \aaddr &= \aalloc(\vv,\astate)
  &
  \astore' &= \astore \join [\aaddr \mapsto \aArgEval(\aexpr,\aenv,\astore)]
  \text.
\end{align*}
\end{gather*}
A non-tail call builds a frame, adds it to the summary, and evaluates the call:
\begin{gather*}
  ((\sembr{\letiform{\vv}{\call}{\expr}}, \aenv, \astore), \ass)\
  \aoToss^{\aphrame_+}\ 
  ((\call,\aenv,\astore), \ass')
  \text{, where }
  \\  
  \begin{align*}
    \aphrame & = (\vv,\expr,\aenv)
    &
    \ass'  & = \apush(\aphrame, \ass)
    \text.
  \end{align*}
\end{gather*}
A function return pops the top frame off the stack.
It also restores older stack summaries.
Thus, the algorithm must know all of the abstract configurations on
paths from the initial configuration that can reach the current
configuration on a path whose net stack change is the the frame to be
popped; we find these abstract configurations using the $\apred :
\sa{Conf} \times \sa{Frame} \to \Pow{\sa{Conf}}$ function:
\begin{equation*}
  \apred(\aconf, \aphrame) = 
  \setbuild{ \aconf' }{ \aconf \mathrel{\RPDTrans_{M_{\rm SS}} ^{\vec{\phrame'}}} \aconf' \text{ and } 
  \fnet{\vec{\phrame'}} = \phrame_{+} } 
  \text.
\end{equation*}
The transition rule for pop is then straightforward:
\begin{gather*}
  (\overbrace{(\aexpr, \aenv, \astore)}^{\astate}, \ass)
  \mathrel{\aoToss^{\aphrame_-}}
  ((\expr,\aenv'',\astore'), \ass')
  \text{, where }
  \\
  \begin{align*}
    (\_,\ass') &\in \apred(\aconf,\aphrame)
    &
    \aenv'' &= \aenv'[\vv \mapsto \aaddr]
    \\
    (\vv,\expr,\aenv') &= \aphrame
    &
    \astore' &= \astore \join [\aaddr \mapsto \aArgEval(\aexpr,\aenv,\astore)]
    \\
    \aaddr &= \aalloc(\vv,\astate)
    \text.  
  \end{align*}
\end{gather*}

\subsection{Soundness of stack-summarizing control-flow analysis}

The soundness of Dyck state graphs has been proved
in~\cite{local:Earl:2010:PDCFA}.
However, the soundness of the stack summaries is provided below for the first time:

\begin{theorem} \label{soundness}
  If $\absmap(\conf) \wt \aconf$, $\conf \To \conf'$
  and $\conf_0 \To^* \conf$,
  then there exists $\aconf' \in \sa{Conf}$ such that
  $\absmap(\conf') \wt \aconf'$ and $\aconf \aToss \aconf'$.
\end{theorem}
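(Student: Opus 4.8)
The plan is to prove the statement as a single-step forward simulation by case analysis on which of the three rules justifies $\conf \To \conf'$, after first splitting the abstraction componentwise. Writing $\conf = (\state,\cont)$ and $\aconf = (\astate,\ass)$, the hypothesis $\absmap(\conf) \wt \aconf$ unpacks into a state part $\absmap(\state) \wt \astate$ and a summary part $\ssum(\cont) \wts \ass$. Soundness of the state component (control string, environment, store, and closures) is exactly what the Dyck-state-graph soundness of~\cite{local:Earl:2010:PDCFA} already establishes, so I would cite that result for those components and concentrate on the genuinely new obligation: that the summary of the witness $\aconf'$ over-approximates $\ssum$ of the concrete successor stack. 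In every case I would take $\astate'$ to be the state produced by the matching abstract rule and discharge $\absmap(\state') \wt \astate'$ from soundness of $\aArgEval$, of $\aalloc$, and monotonicity of the store join $\join$.

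For the tail-call rule the stack is untouched in both semantics, so the successor summary is literally $\ass$ and the obligation $\ssum(\cont) \wts \ass$ is inherited unchanged; the reachability hypothesis is not needed here. For the non-tail-call rule the concrete stack becomes $\phrame : \cont$ with $\phrame = (\vv,\expr,\env)$, while the abstract rule sets $\ass' = \apush(\aphrame,\ass)$ with $\aphrame = (\vv,\expr,\aenv)$. Here the obligation $\ssum(\phrame : \cont) \wts \apush(\aphrame,\ass)$ is discharged \emph{verbatim} by the push-faithfulness requirement imposed on every summary regime, once I verify its two antecedents: $\absmap(\phrame) \wt \aphrame$, which is immediate from $\absmap(\state) \wt \astate$ since the frame reuses the call's environment, and $\ssum(\cont) \wts \ass$, which is the incoming summary hypothesis. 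So the first two cases are routine and local.

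The return rule is the crux and the step I expect to be the main obstacle. Concretely the top frame $\phrame_0 = (\vv,\expr,\env')$ is popped and the successor stack is the tail $\cont$, so I must exhibit a summary $\ass'$ with $\ssum(\cont) \wts \ass'$ that the abstract pop rule is permitted to choose, i.e. with $(\_,\ass') \in \apred(\aconf,\aphrame_0)$ for $\aphrame_0 = \absmap(\phrame_0)$. The difficulty is that $\apred$ is defined through the root-reachable relation $\RPDTrans_{M_{\rm SS}}$ of the abstract system, collecting the root-reachable configurations that reach $\aconf$ along a path whose net stack effect is a single push of $\aphrame_0$; I therefore cannot read a suitable $\ass'$ off $\conf$ alone, and this is precisely where the hypothesis $\conf_0 \To^* \conf$ becomes indispensable. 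My approach is to exploit the balanced (Dyck) structure of the concrete trace: because the run starts from the empty stack and the current stack is nonempty, $\phrame_0$ must have been pushed at some earlier configuration $\conf_{\rm push}$ whose own tail was exactly $\cont$, and the sub-trace from just after that push to $\conf$ has zero net stack change. I would abstract this matched push/pop bracket and appeal to the simulation already in force (the correspondence underlying DSG soundness) to transport it into $M_{\rm SS}$, so that the image of $\conf_{\rm push}$ is a root-reachable abstract configuration reaching $\aconf$ with net push of $\aphrame_0$, hence lying in $\apred(\aconf,\aphrame_0)$ and carrying a summary $\ass'$ with $\ssum(\cont) \wts \ass'$.

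The care in this last case is concentrated in two places. First, the bracket-matching argument must be made precise, and it is really powered by a reachability invariant maintained along $\conf_0 \To^* \conf$ (that $M_{\rm SS}$ root-reaches a configuration above $\absmap(\conf)$ and mirrors the concrete matched-push structure), so in practice I would package the theorem as the step of an induction over this trace rather than prove it in isolation. Second, I must confirm the invariant survives the strengthening from equality $\absmap(\conf) = \aconf$ to the ordering $\absmap(\conf) \wt \aconf$: since the invariant is stated throughout in terms of the summary ordering $\wts$ and summaries form a lattice (requirement 3), a strictly coarser incoming $\ass$ is absorbed without damaging the over-approximation. The remaining bookkeeping for the state component then follows from the cited DSG soundness, completing the construction of $\aconf' = (\astate',\ass')$ with $\absmap(\conf') \wt \aconf'$ and $\aconf \aToss \aconf'$.
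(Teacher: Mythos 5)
Your proposal is correct and follows essentially the same route as the paper's own proof: strong induction on the length of the trace $\conf_0 \To^* \conf$, a case split on the stack action, the push-faithfulness requirement for the non-tail-call case, and, for the pop case, locating the matching earlier push configuration (the paper's $\conf_1$, your $\conf_{\rm push}$), transporting it into the abstract system by the inductive hypothesis, and using the zero-net-stack-change sub-path (the $\epsilon$-closure/shortcut edge, i.e.\ membership in $\apred$) to recover a summary $\ass'$ with $\ssum(\cont') \wts \ass'$. No substantive differences to report.
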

\begin{proof}[sketch]
Let $\conf = (\state, \cont)$,
$\conf' = (\state', \cont')$ and
$\aconf = (\astate, \ass)$,
such that
$\absmap(\conf) \wt \aconf$.
We know by theorems in~\cite{local:Earl:2010:PDCFA} that there exists a state
$\astate'' \in \sa{State}$
such that
$\absmap(\state') \wt \astate''$.
We also know that the first stack is subsumed by the first stack summary:
$\ssum(\cont) \wts \ass$.
So we must prove that there exists a stack summary
$\ass' \in \sa{Summary}$
such that
$\ssum(\cont') \wts \ass'$
and
$\aconf \aToss (\astate'', \ass')$.
The proof continues with a case-wise analysis on the type of the transition as well as strong induction based upon the length of the path to configuration $\conf$.  See Appendix~\ref{proofs} for details.
\end{proof}
%


\section{SSCFA with Abstract Garbage Collection}
\label{ssgammacfa}

Having constructed a framework for iteratively synthesizing stack
summaries during computation of a finite model for a pushdown system,
we can integrate abstract garbage collection.
In this section, we assume the ``reachable addresses'' stack summary
is in use.
We term this analysis SS$\Upgamma$CFA, the product of stack-summarizing
control-flow analysis and abstract garbage collection (also called
$\Upgamma$CFA).
SS$\Upgamma$CFA is a ``best of both worlds'' combination: it has all
the argument precision advantages of abstract garbage collection
and all the return-flow precision advantages of PDCFA.

As with classical abstract garbage collection, we must define what
makes an address or value reachable.
Essentially, an object is \emph{reachable} if it may be used either in
the current configuration or in a subsequent configuration.
If an address is reachable, all the values bound to it are also
reachable.
Values (closures and frames) reference addresses through their
environments, which are reachable as well.
Because values touch addresses and addresses touch values, finding
reachable addresses and values amounts to a bipartite graph search.
The concrete values of unreachable addresses will never be used again
during the course of the computation; thus, it is safe to set the
values of these addresses to bottom within the store.

The reachability exploration of the store begins with the addresses
that the current configuration $\aconf$ can immediately reach, called
the root set, $\rootset(\aconf)$.
The root function, $\rootset : \sa{Conf} \to \Pow{\sa{Addr}}$,
returns the root set for a configuration:
\[
\rootset((\expr,\aenv,\astore),\ass) = 
\ass \cup \setbuild{\aenv(\vv)}{\vv \in \free(\expr)}
\text,
\]
where the function $\free : \syn{Exp} \to \Pow{\syn{Var}}$ returns the free variables in the given expression.
The root set contains all the addresses bound to free variables in the expression, $\expr$, as well as the addresses in the reachable address summary.

The touch function, $\touch_c : \sa{Clo} \to \sa{Addr}$, finds addresses referenced in closures:
\(
  \touch_c(\lam,\aenv) = 
  \setbuild{\aenv(\vv)}{\vv \in \free(\lam)}
\).
The touching relation, $(\touchrel{\astore}) : \sa{Addr} \to \sa{Addr}$ links addresses directly to addresses:
\[
  \aaddr \touchrel{\astore} \aaddr' \text{ iff }
  \aaddr' \in \touch_c(\aval) \text{ and }
  \aval \in \astore(\aaddr)
  \text.
\]
With this relation, finding all reachable addresses of a configuration
$\aconf$ becomes the transitive closure of the touching relation:
\begin{equation*}
  \reach(\aconf) = \setbuild{\aaddr'}{\aaddr \touchrel{\astore}^* \aaddr' \text{ and }
    \aaddr \in \rootset(\aconf)}
  \text.
\end{equation*}
Finally, we define the abstract garbage collector itself, $\agc :
\sa{Conf} \to \sa{Conf}$, which simply restricts the store to the
reachable addresses:\footnote{
We define function restriction, $f|X$, so that $f|X = \lambda x . \cif{x \in X}{f(x)}{\bot}$.
}
\begin{equation*}
  \agc(\aconf) = (\expr,\aenv,\astore\vert\reach(\aconf),\ass)
  \text{, where } \aconf = (\expr,\aenv,\astore,\ass)\text.
\end{equation*}

The abstract transition relation for SS$\Upgamma$CFA, $(\aToagc) \subseteq
\sa{Conf} \times \sa{Conf}$, needed for stack-summarizing control-flow
analysis with abstract garbage collection extends the abstract
transition relation to collect before each transition:
\begin{equation*}
  \aconf \aToagc \aconf'
  \text{ iff }
  \agc(\aconf) \aToss \aconf'
  \text.
\end{equation*}
%
%
The soundness theorems and their proofs for classical abstract garbage
collection are in Chapter 6 of \cite{mattmight:Might:2007:Dissertation};
they adapt readily to our pushdown framework.



\section{Related Work} \label{related}

Stack summarization, the central contribution of this paper, overcomes
the apparent incompatibilities of two orthogonal anti-merging
techniques designed to improve precision: abstract garbage
collection~\cite{mattmight:Might:2006:GammaCFA} and pushdown
control-flow
analysis~\cite{local:Earl:2010:PDCFA,mattmight:Vardoulakis:2010:CFA2}.
As such, this work directly builds upon both techniques, as well as
classical control-flow analysis~\cite{mattmight:Shivers:1991:CFA},
abstract machines~\cite{mattmight:Felleisen:1987:CESK}, and abstract
interpretation~\cite{mattmight:Cousot:1977:AI,mattmight:Cousot:1979:Galois}
in general.

Abstract garbage collection~\cite{mattmight:Might:2006:GammaCFA,local:VanHorn:2010:Abstract} curbs
argument-merging, but it has not yet been applied to anything
beyond classical control-flow analysis.
%


Vardoulakis and Shivers's CFA2~\cite{mattmight:Vardoulakis:2010:CFA2}
is the precursor to the pushdown control-flow
analysis~\cite{local:Earl:2010:PDCFA} presented in
Section~\ref{pdcfa}.
CFA2 is a table-driven summarization algorithm that exploits the
balanced nature of calls and returns to improve return-flow precision
in a control-flow analysis.
While CFA2 uses a concept called ``summarization,'' it is a
summarization of execution paths of the analysis, roughly equivalent
to Dyck state graphs rather than our stack summaries.

In terms of recovering precision, pushdown control-flow
analysis~\cite{local:Earl:2010:PDCFA} is the dual to abstract garbage
collection:
it focuses on the global interactions of configurations via
transitions to precisely match push-pop/call-return, thereby
eliminating all return-flow merging.
However, pushdown control-flow analysis does nothing to improve
argument merging.

In the context of first-order languages, pushdown approaches to 
analysis are well-established.
Reps \emph{et al.}~\cite{mattmight:Reps:1995:Precise} uses a
summarization algorithm to compute a Dyck-state-graph-like solution.
Debray and Proebsting~\cite{dvanhorn:Debray1997Interprocedural}
develop an analysis with perfect return-flow in the presence of tail
calls.
For higher-order languages, finite-state approaches
\emph{approximating} the pushdown precision of return-flow have been
explored by Midtgaard and Jensen~\cite{mattmight:Midtgaard:2009:CFA}
and Van Horn and Might~\cite{local:VanHorn:2010:Abstract}.
Our work extends the pushdown approach to higher-order languages with
tail calls, and produces stack summaries to enable abstract garbage
collection.





\section{Conclusion} \label{concl}

%
We presented SS$\Upgamma$CFA, a synergistic fusion of pushdown
analysis and abstract garbage collection to combat the twin sinks for
precision in higher-order flow analysis: merging in arguments, and
merging in return-flow.
In order to create SS$\Upgamma$CFA, we had to first create SSCFA, a
pushdown control-flow analysis for higher-order programs capable of
iteratively synthesizing summaries of stack properties; in this case,
we required a summary of reachable addresses on the stack.
Abstract garbage collection combats merging in arguments by
eliminating monotonicity for the abstract store; pushdown analysis
eliminates the loss in return-flow precision by simulating the
concrete call stack with a pushdown stack, thereby properly matching
returns to call.





\bibliographystyle{plain}
\bibliography{bibliography,local} 
\newpage
\appendix

\section{Classical control-flow analysis} \label{classical-cfa}

This section presents traditional control-flow analysis for reference
and comparison with pushdown and stack-summarizing control flow analysis.
Classical control-flow analysis for ANF operates over the abstract
state-space in Figure \ref{fig:abs-conf-space}.
Our classical formulation follows Van Horn and Might's technique of
allocating abstract continuations in the store, as opposed to stacking
them~\cite{local:VanHorn:2010:Abstract}.

\begin{figure}
\begin{align*}
  \aconf \in \sa{Conf} &= \sa{State} \times \sa{Addr} && \text{[configurations]}
  \\
  \astate \in \sa{State} &= \syn{Exp} \times \sa{Env} \times \sa{Store} && \text{[states]}
  \\
  \aenv \in \sa{Env} &= \syn{Var} \parto \sa{Addr} && \text{[environments]}
  \\
  \astore \in \sa{Store} &= \sa{Addr} \to \Pow{\sa{Clo} \cup \sa{Frame}} && \text{[stores]}
  \\
  \aclo \in \sa{Clo} &= \syn{Lam} \times \sa{Env} && \text{[closures]}
  \\
  \aphrame \in \sa{Frame} &= \syn{Var} \times \syn{Exp} \times \sa{Env} \times \sa{Addr} && \text{[stack frames]}
  \\
  \arp, \aaddr \in \sa{Addr} &\text{ is a \emph{finite} set of addresses} && \text{[addresses]}
\end{align*}
\caption{Abstract configuration-space for classical control-flow analysis.}
\label{fig:abs-conf-space}
\end{figure}

To complete the abstract semantics we need to define program injection, atomic expression evaluation, reachable configurations, transition relation, address allocation, and abstraction function:

\paragraph{Program injection}
The abstract injection function $\aInject : \syn{Exp} \to \sa{Conf}$
pairs an expression with an empty environment, an empty store and an
empty stack to create the initial abstract configuration:
\begin{equation*}
  \aconf_0 = \aInject(\expr) = (\expr, [], [], \nulladdr)
  \text,
\end{equation*}
where $\nulladdr$ is an address bound to nothing in the store, thus representing an empty stack.

\paragraph{Atomic expression evaluation}
The abstract atomic expression evaluator, $\aArgEval : \syn{Atom}
\times \sa{Env} \times \sa{Store} \to \PowSm{\sa{Clo} \cup \sa{Frame}}$, returns the value of
an atomic expression or a stack frame in the context of an environment and a store;
note how it returns a set:
\begin{align*}
  \aArgEval(\lam,\aenv,\astore) &= \set{(\lam,\env)} && \text{[closure creation]}
  \\
  \aArgEval(\vv,\aenv,\astore) &= \astore(\aenv(\vv)) && \text{[variable look-up]}
  \text.
\end{align*}

\paragraph{Reachable configurations}
The abstract program evaluator $\aEval : \syn{Exp} \to
\PowSm{\sa{Conf}}$ returns all of the configurations reachable from
the initial configuration:
\begin{equation*}
  \aEval(\expr) = \setbuild{ \aconf }{ \aInject(\expr) \aTo^* \aconf } 
  \text.
\end{equation*}

\paragraph{Transition relation}
The abstract transition relation $(\aTo) \subseteq \sa{Conf} \times
\sa{Conf}$ has three rules, two of which have become nondeterministic.
A tail call may fork because there could be multiple abstract closures
that it is invoking:
\begin{align*}
  (\overbrace{(\sembr{\appform{\fexpr}{\aexpr}}, \aenv, \astore)}^{\astate}, \arp)
  &\aTo
  ((\expr,\aenv'',\astore'),\arp)
  \text{, where }
  \\
  (\sembr{\lamform{\vv}{\expr}}, \aenv') &\in \aArgEval(\fexpr,\aenv,\astore)
  \\
  \aaddr &= \aalloc(\vv,\astate)
  \\
  \aenv'' &= \aenv'[\vv \mapsto \aaddr]
  \\
  \astore' &= \astore \join [\aaddr \mapsto \aArgEval(\aexpr,\aenv,\astore)]
  \text.
\end{align*}
The partial order for stores is:
\begin{equation*}
  (\astore \join \astore')(\aaddr) = \astore(\aaddr) \union \astore'(\aaddr)
  \text.
\end{equation*}

\noindent
A non-tail call builds a frame, adds it to the store, and evaluates the call:
\begin{align*}
  (\overbrace{(\sembr{\letiform{\vv}{\call}{\expr}}, \aenv, \astore)}^{\astate}, \arp)
  &\aTo
  ((\call,\aenv,\astore), \arp')
  \text{, where }
  \\
  \arp' &= \aalloc(\vv,\astate)
  \\
  \astore' &= \astore \join [\arp' \mapsto (\vv,\expr,\aenv,\arp)]
  \text.
\end{align*}

\noindent
A function return may fork because there could be multiple frames bound to the current return pointer:
\begin{align*}
  (\overbrace{(\aexpr, \aenv, \astore)}^{\astate}, \arp)
  &\aTo
  ((\expr,\aenv'',\astore'), \arp')
  \text{, where }
  \\
  (\vv,\expr,\aenv',\arp') &\in \astore(\arp)
  \\
  \aaddr &= \aalloc(\vv,\astate)
  \\
  \aenv'' &= \aenv'[\vv \mapsto \aaddr]
  \\
  \astore' &= \astore \join [\aaddr \mapsto \aArgEval(\aexpr,\aenv,\astore)]
  \text.  
\end{align*}

\paragraph{Allocation, polyvariance and context-sensitivity}
\label{sec:polyvariance}
In the abstract semantics, the abstract allocation function
$\aalloc : \syn{Var} \times \sa{State} \to \sa{Addr}$ determines the
polyvariance of the analysis (and, by extension, its
context-sensitivity).
The abstract allocation function is overloaded to assign return pointers
(addresses) to abstract stack frames:
$\aalloc : \sa{Frame} \times \sa{State} \to \sa{Addr}$.
In a control-flow analysis, \emph{polyvariance} literally refers to
the number of abstract addresses (variants) there are for each
variable.
%

\paragraph{Abstraction function}


The abstraction function ($\absmap$) converts any structure from the
concrete semantics (Figure \ref{fig:conc-abs-conf-space}) into an abstract
form of the same structure (Figure \ref{fig:abs-conf-space}).
(The abstraction function is defined in Appendix~\ref{abstraction-functions}.  While not specifically defined for these semantics, the abstraction function there can easily be modified to work with return pointers.)

\paragraph{Comparison to pushdown control-flow analysis}

The abstract semantics of pushdown control-flow analysis 
are similar to those of the abstract
semantics for classical control-flow analysis (Figure
\ref{fig:abs-conf-space}).
However, the few key differences are worth noting.

Foremost, we are not working with the configuration-space directly;
rather we deal with the control-state-space.
Hence, a configuration is now defined as a state and a stack paired
together.
The results of pushdown control-flow analysis are rooted pushdown
systems rather than nondeterminisitic finite automata.
A rooted pushdown system handles the stack and configurations
implicitly, so we use the control-state-space instead of the
configuration-space.

The next change is that the store only contains bindings from
addresses to closures, instead of from addresses to closures and
frames.
The abstraction of the store creates imprecision.
By keeping the frames out of the store (and in a precise structure) we
avoid this imprecision for continuations.

The final difference is that frames no longer contain return pointers.
Again, this is because the enriched abstract transition system
encapsulates that information precisely.



\section{Abstraction Functions} \label{abstraction-functions}

This section defines the abstraction functions used throughout this paper.
In particular, the following abstraction functions links the concrete and the abstract configuration-spaces in Section~\ref{pdcfa}.
The abstraction function recurs structurally:
\begin{align*}
  \absmap(\state, \cont)
  &= (\absmap_{State}(\state), \absmap_{Kont}(\cont))
  && \text{[configuration abstraction]}
  \\
  \absmap_{State}(\expr, \env, \store)
  &= (\expr, \absmap_{Env}(\env), \absmap_{Store}(\store))
  && \text{[state abstraction]}
  \\
  \absmap_{Env}(\env)(\vv)
  &= \absmap_{Addr}(\env(\vv))
  && \text{[environment abstraction]}
  \\
  \absmap_{Store}(\store)(\aaddr)
  &= \bigsqcup_{\absmap_{Addr}(\addr) = \aaddr} \absmap_{Clo}(\store(\addr))
  && \text{[store abstraction]}
  \\
  \absmap_{Clo}(\lam, \env)
  &= \set{(\lam, \absmap_{Env}(\env))}
  && \text{[closure abstraction]}
  \\
  \absmap_{Kont}(\vect{\phrame_1, \dots, \phrame_n})
  &= \vect{\absmap_{Frame}(\phrame_1), \dots, \absmap_{Frame}(\phrame_n)}
  && \text{[stack abstraction]}
  \\
  \absmap_{Frame}(\vv, \expr, \env)
  &= (\vv, \expr, \absmap_{Env}(\env))
  && \text{[frame abstraction]}
  \text.
\end{align*}

Just as address-allocation is a parameter, the address abstraction function, $\absmap_{Addr} : \Addr \to \aAddr$, is a parameter for the abstract semantics.

For Sections~\ref{ssintro}, ~\ref{sscfa}, and~\ref{ssgammacfa}, the abstraction function for configurations is:
\begin{align*}
  \absmap(\state, \cont)
  &= (\absmap_{State}(\state), \ssum(\cont))
  && \text{[configuration abstraction]}
  \text,
\end{align*}
where the stack summarization function, $\ssum$, is a parameter as described in Section~\ref{ssintro}.



\section{Building a Dyck configuration graph}

\label{sec:algorithm}

\subsection{Building a Dyck state graph for PDCFA}

A fixed-point approach to building Dyck state graphs for PDCFA is best presented in~\cite{local:Earl:2010:PDCFA}.
The algorithm in Figure \ref{fig:build-DSG-algorithm} is a similar iterative algorithm, but it is formulated for stack-summarizing control-flow analysis (Section \ref{sscfa}).
The underlying approaches are similar.
In fact, for the algorithm in Figure \ref{fig:build-DSG-algorithm},
replacing configurations with states and switching the transition relation used throughout to the transition relation of Section~\ref{pdcfa} ($\aoTons$) is enough to convert the algorithm to build standard Dyck state graphs.
The main difference is that the algorithm presented here examines a single state, transition, or shortcut edge each iteration, whereas the fixed-point algorithm examines the entire frontier each iteration.

\subsection{Building a Dyck state graph for SSCFA}

The algorithm in Figure~\ref{fig:build-DSG-algorithm} builds the Dyck configuration graph and the \ecg{} for a given program $\expr$.
It uses the worklists, $\delconf$, $\deledge$, and $\delshort$, to maintain the frontier of unexplored configurations, transitions, and shortcut edges respectively.
The while loop runs until all the worklists are empty, which is exactly when everything reachable has been explored.
Each iteration of the while loop, explores one previously unexplored shortcut edge, transition, or configuration.

%

Each new configuration, transition, and shortcut edge can imply other configurations, transitions, and shortcut edges globally.
Thus, after each configuration, transition, or shortcut edge is explored, the implied configurations, transitions, and shortcut edges are added to the worklists.

The procedure $\addEmpty$ finds all the configurations, transitions, and shortcut edges implied by the given shortcut edge.
%
%
The only new transitions implied by a shortcut edge are pop transitions that are enabled by a new push transition.
The only new configurations are those in the newly implied pop transitions.

The procedure $\addEdge$ finds all the configurations, transitions, and shortcut edges implied by the given transition.
There are three types of transitions:
First, there are no-op ($\epsilon$) transitions, which immediately become shortcut edges, and so are added to the \ecg{} and are expanded.
Next, there are push transitions, which imply new pop transitions (and configurations from these pop transitions) as well as new shortcut edges (from pre-existing pop transitions).
Finally, there are pop transitions, which imply only new new shortcut edges from pre-existing push transitions.

The last procedure $\Explore$ finds all the configurations, transitions, and shortcut edges implied by the given configuration.
A configuration cannot imply any shortcut edges directly.
However, a configuration can imply new no-op, push, or pop transitions as well as new configurations from these transitions.

\begin{figure}
{
\def\ind{\hspace{0.2in}}
\def\gets{\mathrel{\leftarrow}}
\parindent=0.0in

\begin{tabular}{l@{\hspace{0.1in}}l}
  \textbf{procedure} $\textit{BuildDyck}(\expr)$
  \\
  $\ind\aconf_0 \gets \aInject_\aconf(\expr)$;
  $\dsg \gets (\emptyset,\StackAlpha,\emptyset,\aconf_0)$;
  $\epcg \gets (\emptyset,\emptyset)$;
  $\delconf \gets \{\aconf_0\}$;
  $\deledge \gets \emptyset$;
  $\delshort \gets \emptyset$
  \\
  $\ind$\textbf{while}
  $(
  \delconf \neq \emptyset
  \textbf{ or }
  \deledge \neq \emptyset
  \textbf{ or }
  \delshort \neq \emptyset
  )$
  \\
  $\ind$
  $\ind$
  \textbf{if} $(\delshort \neq \emptyset),$\textbf{ let}
  $(\aconf,\aconf') \in \delshort$
  \\
  $\ind$
  $\ind$
  $\ind$
  $(\delconf',\deledge',\delshort') \gets \addEmpty(\dsg,\epcg)(\biedge{\aconf}{\aconf'})$
  \\
  $\ind$
  $\ind$
  $\ind$
  $(\DSConfs, \ECEdges) \gets \epcg$
  \\
  $\ind$
  $\ind$
  $\ind$
  $\epcg \gets (\DSConfs, \ECEdges \cup \{(\aconf,\aconf')\})$
  \\
  $\ind$
  $\ind$
  $\ind$
  $(\delconf,\deledge,\delshort) \gets 
  (\delconf \cup \delconf',
  \deledge \cup \deledge',
  \delshort \cup \delshort' - \{(\aconf,\aconf')\})$
  \\
  $\ind$
  $\ind$
  \textbf{else if} $(\deledge \neq \emptyset),$\textbf{ let}
  $(\aconf,\stackact,\aconf') \in \deledge$
  \\
  $\ind$
  $\ind$
  $\ind$
  $(\delconf',\deledge',\delshort') \gets \addEdge(\dsg,\epcg)(\aconf \pdedge^\stackact \aconf')$
  \\
  $\ind$
  $\ind$
  $\ind$
  $(\DSConfs, \StackAlpha, \DSEdges, \aconf_0) \gets \dsg$
  \\
  $\ind$
  $\ind$
  $\ind$
  $\dsg \gets (\DSConfs, \StackAlpha, \DSEdges \cup \{(\aconf,\stackact,\aconf')\},\aconf_0)$
  \\
  $\ind$
  $\ind$
  $\ind$
  $(\delconf,\deledge,\delshort) \gets 
  (\delconf \cup \delconf',
  \deledge \cup \deledge' - \{(\aconf,\stackact,\aconf')\},
  \delshort \cup \delshort')$
  \\
  $\ind$
  $\ind$
  \textbf{else if} $(\delconf \neq \emptyset),$\textbf{ let}
  $\aconf \in \delconf$
  \\
  $\ind$
  $\ind$
  $\ind$
  $(\delconf',\deledge',\delshort') \gets \Explore(\dsg,\epcg)(\aconf)$
  \\
  $\ind$
  $\ind$
  $\ind$
  $(\DSConfs, \StackAlpha, \DSEdges, \aconf_0) \gets \dsg$
  \\
  $\ind$
  $\ind$
  $\ind$
  $(\DSConfs, \ECEdges) \gets \epcg$
  \\
  $\ind$
  $\ind$
  $\ind$
  $(\dsg,\epcg) \gets 
  ((\DSConfs \cup \{\aconf\}, \StackAlpha, \DSEdges,\aconf_0),
  (\DSConfs \cup \{\aconf\}, \ECEdges))$
  \\
  $\ind$
  $\ind$
  $\ind$
  $(\delconf,\deledge,\delshort) \gets 
  (\delconf \cup \delconf' - \{\aconf\},
  \deledge \cup \deledge',
  \delshort \cup \delshort')$
  \\
  $\ind$\textbf{return} $\dsg,\epcg$
\end{tabular}

\vspace{.5em}

\begin{tabular}{l@{\hspace{0.1in}}l}
  \textbf{procedure} $\addEmpty(\dsg,\epcg)({\aconf},{\aconf'})$
  \\
  $\ind(\DSConfs, \StackAlpha, \DSEdges, \aconf_0) \gets \dsg$;
  $(\DSConfs, \ECEdges) \gets \epcg$
  \\
  $\ind\deledge \gets \setbuild{(\aconf',{\aphrame_-},\aconf_2)}
    {
      (\aconf_1,{\aphrame_+},\aconf) \in \DSEdges
      \text{ and }
      \aconf' \aoToss^{\aphrame_-} \aconf_2
    }$
  \\
  $\ind\delconf \gets \setbuild{\aconf_2}{(\aconf_1,{\aphrame_-},\aconf_2) \in \deledge}$
  \\
  $\ind\delshort \gets \setbuild{({\aconf_1},{\aconf'})}{({\aconf_1},{\aconf}) \in \ECEdges}$
  \\
  $\ind\ind\ind\cup
  \setbuild{({\aconf},{\aconf_2})}{({\aconf'},{\aconf_2}) \in \ECEdges}$
  \\
  $\ind\ind\ind\cup
  \setbuild{({\aconf_1},{\aconf_2})}
    {
      ({\aconf_1},{\aconf}), 
      ({\aconf'},{\aconf_2}) \in \ECEdges
    }$
  \\
  $\ind$\textbf{return} $\delconf - \DSConfs,\ \deledge - \DSEdges,\ \delshort - \ECEdges$
\end{tabular}

\vspace{.5em}

\begin{tabular}{l@{\hspace{0.1in}}l}
  \textbf{procedure} $\addEdge(\dsg,\epcg)(\aconf \pdedge^\stackact \aconf')$
  \\
  $\ind(\DSConfs, \StackAlpha, \DSEdges, \aconf_0) \gets \dsg$;
  $(\DSConfs, \ECEdges) \gets \epcg$
  \\
  $\ind$\textbf{if} $(\stackact = \epsilon)$
  \textbf{return} $\addEmpty(\dsg,(\DSConfs, \ECEdges \cup \{({\aconf},{\aconf'})\}))({\aconf},{\aconf'})$
  \\
  $\ind$\textbf{else if} $(\stackact = \aphrame_+)$
  \\
  $\ind\ind$
  $\deledge \gets \setbuild{\aconf_1 \pdedge^{\aphrame_-} \aconf_2}
    {
      \biedge{\aconf'}{\aconf_1} \in \ECEdges
      \text{ and }
      \aconf_1 \aoToss^{\aphrame_-} \aconf_2
    }$
  \\
  $\ind\ind$
  $\delconf \gets \setbuild{\aconf_2}{\aconf_1 \pdedge^{\aphrame_-} \aconf_2 \in \deledge}$
  \\
  $\ind\ind$
  $\delshort \gets \setbuild{\biedge{\aconf}{\aconf_2}}
    {
      \biedge{\aconf'}{\aconf_1} \in \ECEdges
      \text{ and }
      \aconf_1 \pdedge^{\aphrame_-} \aconf_2 \in \DSEdges
    }$
  \\
  $\ind\ind$
  \textbf{return} $\delconf - \DSConfs,\ \deledge - \DSEdges,\ \delshort - \ECEdges$
  \\
  $\ind$\textbf{else if} $(\stackact = \aphrame_-)$
  \\
  $\ind\ind$
  \textbf{return} $\emptyset,\ \emptyset,\ 
  \setbuild{\biedge{\aconf_1}{\aconf'}}
           {
             \biedge{\aconf_2}{\aconf} \in \ECEdges
             \text{ and }
             \aconf_1 \pdedge^{\aphrame_+} \aconf_2 \in \DSEdges
           }- \ECEdges$
\end{tabular}

\vspace{.5em}

\begin{tabular}{l@{\hspace{0.1in}}l}
  \textbf{procedure} $\Explore(\dsg,\epcg)(\aconf)$
  \\
  $\ind (\DSConfs, \StackAlpha, \DSEdges, \aconf_0) \gets \dsg$;
  $(\DSConfs, \ECEdges) \gets \epcg$
  \\
  $\ind$\textbf{return} 
  $\setbuild{\aconf'}
    {\aconf \aoToss^{\stackact} \aconf'} - \DSConfs,\ 
    \setbuild{(\aconf,\stackact,\aconf')}
             {\aconf \aoToss^{\stackact} \aconf'} - \DSEdges,\ 
             \emptyset$
\end{tabular}
}
\caption{Algorithm to build a Dyck configuration graph. 
  Procedures $\addEmpty$, $\addEdge$ and $\Explore$ determine
  what configurations, transitions, and shortcut edges are implied by
  a given shortcut edge, transition, or configuration, respectively.}
\label{fig:build-DSG-algorithm}
\end{figure}

\subsection{Building a Dyck state graph for SS$\Gamma$CFA}

Finally, we let the procedures of Figure~\ref{fig:build-DSG-algorithm} use this transition relation ($\aToagc$) and the reachable address push function $\apushra$.
Now the procedure $\textsc{BuildDyck}$ of Figure \ref{fig:build-DSG-algorithm} computes stack-summarizing control-flow analysis with abstract garbage collection soundly.

\section{Proofs}
\label{proofs}

\begin{proof}[of Theorem~\ref{soundness}]
Without loss of generality, assume that the path from the initial configuration $\conf_0$ to $\conf$ is length $n$, so the path to the new configuration $\conf'$ is $n+1$ transitions from the initial configuration.
The inductive hypothesis is that the theorem holds for all paths of length less than or equal to $n$.
So far the scenario can be diagrammed as below:
\[
\xymatrix{
  \conf_0 \ar@{=>}[r] \ar[d]_\absmap
  &
  \dots \ar@{=>}[r]
  &
  \conf \ar@{=>}[r] \ar[d]_\absmap
  &
  \conf' \ar[d]_\absmap
  \\
  \aconf_0 \ar@2{~>}[r]
  &
  \dots \ar@2{~>}[r]
  &
  \aconf \ar@2{~>}[r]^{\stackact}
  &
  \aconf'
}
\]
We now have three cases depending on what $\stackact$ is:
\begin{itemize}
  \item $\aconf \aToss^\epsilon \aconf' = (\astate'', \ass')$
    \\
    No change is made to the stack in the concrete, thus the stacks are equal:
    $\cont = \cont'$.
    Likewise, no change is made in the abstract, thus:
    $\ass = \ass'$.
    Since the first stack is subsumed by the first stack summary,
    the second stack must be subsumed by the second stack summary:
    $\ssum(\cont') \wts \ass'$.
  \item  $\aconf \aToss^{\aphrame_+} \aconf' = (\astate'', \ass')$
    \\
    A frame $\phrame$, such that $\absmap(\phrame) \wt \aphrame$, must be pushed onto the stack in the concrete, so the stacks are related thusly:
    $\phrame : \cont = \cont'$.
    Likewise, the stack summaries are so related:
    $\apush(\aphrame, \ass) \wts \ass'$.
    By the constraint on all push operations:
    $\ssum(\phrame : \cont) \wts \apush(\aphrame, \ssum(\cont))$.
    We can make the following replacements:
    $\ssum(\cont') \wts \apush(\aphrame, \ass)$.
    By the definition of the transitive relation ($\aToss$):
    $\ass' = \apush(\aphrame, \ass)$.
    Finally we have:
    $\ssum(\cont') \wts \ass'$.
  \item  $\aconf \aToss^{\aphrame_-} \aconf' = (\astate'', \ass')$
    \\
    A frame $\phrame$, such that $\absmap(\phrame) \wt \aphrame$, must be popped from the stack in the concrete, so the stacks are related like this:
    $\cont = \phrame : \cont'$.
    We know that configuration $\conf$ is reachable from the initial configuration, which has an empty stack.
    Since the transition from the configuration pops off a frame $\phrame$, it does not currently have an empty stack.
    So there exists a path,
    $\conf_0 \To^* \conf_1 \To^{\phrame_+} \conf_2 \To^{\vec{\stackact}} \conf$,
    such that the net of the stack actions after the push, $\fnet{\vec{\stackact}}$, is empty.
    Let
    $\conf_1 = (\state_1, \cont_1)$.
    Since the net of the stack actions is empty, the stack at the configuration before the last previously unmatched push, $\cont_1$, is identical to the stack after the current pop, $\cont'$:
    $\cont_1 = \cont'$.
    
    By the inductive hypothesis, there is a path through the Dyck configuration graph that parallels and mimics the path above.
    Thus, there is a configuration $\aconf_1 = (\astate_1, \ass_1)$ such that $\absmap(\conf_1) \wt \aconf_1$.
    Also by this path, there is a sub-path from the configuration $\aconf_1$ to the new configuration $\aconf'$ that makes no changes to the stack.
    Therefore, there is a shortcut edge between these two configurations $\aconf_1$ and $\aconf'$.
    The proof of the first case for shortcut edges works for no-op transitions.
    Thus the stack summaries at these two configurations are the same:
    $\ass_1 = \ass'$.
    
    The current situation is as follows:
    \[
    \xymatrix{
      \conf_0 \ar@{=>}[r]^{\ast} \ar[d]_\absmap
      &
      \conf_1 \ar@{=>}[r]^{\phrame_+} \ar[d]_\absmap \ar@/^2pc/[rrr]^\epsilon
      &
      \conf_2 \ar@{=>}[r]^{\vec{\stackact}} \ar[d]_\absmap
      &
      \conf \ar@{=>}[r]^{\phrame_-} \ar[d]_\absmap
      &
      \conf' \ar[d]_\absmap
      \\
      \aconf_0 \ar@2{~>}[r]^{\ast}
      &
      \aconf_1 \ar@2{~>}[r]^{\aphrame_+} \ar@/_2pc/[rrr]^\epsilon
      &
      \aconf_2 \ar@2{~>}[r]^{\vec{\hat{\stackact}}}
      &
      \aconf \ar@2{~>}[r]^{\aphrame_-}
      &
      \aconf'
    }
    \]
    Since the configuration before the last previously unmatched push, $\conf_1$ is subsumed by its equivalent in the Dyck configuration graph, $\aconf_1$, its stack is subsumed by the stack summary of configuration $\aconf_1$:
    $\ssum(\cont_1) \wts \ass_1$.
    Since this stack and this stack summary are identical to the stack $\cont'$ and the stack summary $\ass'$ respectively, we have:
    $\ssum(\cont') \wts \ass'$.
\qed
\end{itemize}
\end{proof}

\end{document}